\newtheorem{theorem}{Theorem}[section]
\newtheorem{example}[theorem]{Example} 
\newtheorem{proposition}[theorem]{Proposition}
\newtheorem{corollary}[theorem]{Corollary}
\newtheorem{definition}[theorem]{Definition}
\newtheorem{lemma}[theorem]{Lemma}
\newtheorem{remark}[theorem]{Remark}
\newtheorem{algorithm}[theorem]{Algorithm}
\def\ali{\noindent}
\def\z{\mathbb{Z}}
\def\c{\mathbb{C}}
\title{Liouvillian Solutions of Schr\"odinger Equation with Polynomial
	Potentials using Gr\"obner Basis}
\author{Primitivo Bel\'en Acosta-Hum\'anez\inst{1,2} \& Henock Venegas-G\'omez\inst{3}}
\address{ Instituto Superior de Formaci\'on Docente Salom\'e Ure\~na - ISFODOSU\\ Recinto Emilio Prud'Homme, Santiago de los Caballeros, Rep\'ublica Dominicana
  \nextinstitute 
Facultad de Ciencias B\'asicas y Biom\'edicas -- Universidad Simón Bolívar\\
  Barranquilla, Colombia
\nextinstitute
   Escuela de Matem\'aticas -- Universidad Nacional de Colombia\\
  Medellín, Colombia
  \email{primitivo.acosta-humanez@isfodosu.edu.do,
  	hvenegasg@unal.edu.co}
 }
\begin{document} 

\maketitle

\begin{abstract}
The main aim of this paper is the presentation of a new methodology to obtain Liouvillian solutions of stationary one dimensional Schr\"odinger equation with quasi-solvable polynomial potentials through the using of differential Galois theory and Gr\"obner basis. We illustrate these results by the computing of polynomial potentials of degree 4, 6, 8, 10, 12, 14. Moreover, we show an implementation in \textbf{Mathematica} for the decatic potential.
\end{abstract}
\textbf{Keywords:} Differential Galois theory, Gr\"obner basis, quasi-solvable  potentials, Liouvillian solutions, Schr\"odinger Equation.\\
\textbf{MSC 2010:} 12H05, 13P10, 34A05, 81Q05.

\section*{Introduction}
This paper contains the interaction of four important areas of research: quantum mechanics, algebraic geometry, differential algebra and symbolic computation. In particular, the paper is devoted to the obtaining of explicit eigenvalues of Schr\"odinger operators in non-relativistic quantum mechanics, the so-called solvable and quasi-solvable models in quantum mechanics, using theoretical results of differential Galois theory and computational methods that includes Gr\"obner basis.\medskip

\noindent The obtaining of explicit solutions of the non-relativistic Schr\"odinger Equation, the so-called solvable models, has been a very important topic in where Landau was interested, see \cite{Lan}. Therefore, quasi-solvable models in quantum mechanics is a recent research topic for people working in mathematical physics. The seminal works of Bender and Dunne, see \cite{BeDu,BrNa}, followed by \cite{ChMo,handy,liu}, increased the interest of a large number of researchers in mathematics and physics for the study of quasi-solvable models in quantum mechanics.\medskip

\noindent On the other hand, algebraic techniques and computational methods  helped to the development of differential equations. In particular, the theory developed by Picard and Vessiot, the Galois theory in the context of linear differential equations, known as Differential Galois Theory, provides algorithmic aspects to solve a large number of differential equations. One of these algorithmic  tools is Kovacic Algorithm, see \cite{ko}. Theoretical aspects of differential Galois theory can be found in the references \cite{IK,EK,MV}, while some applications applications of Differential Galois Theory in classical and quantum mechanics can be found  in the references \cite{Prim,acbl,AMW,Jmorales}. The algebraic geometry is the heart of differential Galois in the modern language, see \cite{EK}, also can be used to compute solutions of differential equations with parameters though Gr\"obner basis, see \cite{cox}.\\

\noindent In this paper we present a new computational method to obtain the explicit values of energy and wave functions for Schr\"odinger equations with polynomial potentials of degree even. The method is based in the characterization of differential Galois groups for Schr\"odinger equations with polynomial potentials, see \cite{acbl,AMW}, combined by the first time with the using of Gr\"obner basis. This paper is improves some results presented in \cite{hv}, which was  inspired in \cite{Prim,acbl,AMW}.\medskip

\noindent The structure of this paper is as follows. Section 1 contains the basic preliminaries to understand the rest of the paper and it is divided in three subsections. The first subsection is concerning the preliminaries of differential Galois theory, also know as Picard-Vessiot theory, following \cite{IK,MV}. The second subsection is concerning the theoretical an computational aspects related with Gr\"obner basis following \cite{cox1}. The  third subsection is devoted to non-relativistic quantum mechanics from a galoisian point of view following \cite{Prim,AMW}. Section 2 contains the results of this paper, which includes the study of Liouvillian solutions of Schr\"odinger equations with polynomial potentials of degree 4, 6, 8, 10, 12 and 14. Section 3 corresponds to final remarks and conclusions of the paper.

\section{Theoretical Background}
This section is devoted to the basic concepts necessaries to understand the rest of the paper. Preliminaries in this section includes theoretical aspects of Differential Galois Theory, Gr\"obner basis theory and galoisian approach to the integrability of Schr\"odinger equation.

\subsection{Differential Galois Theory} 

\ali A \textit{derivation} on a field $K$ is a map  $':K\rightarrow K$ satisfying the following properties, for all $a,b \in K$:
\begin{enumerate}
\item $(a+b)'=a'+b'$ and
\item $(ab)'=ab'+a'b$.
\end{enumerate}
A field $K$ equipped whit a derivation is named a \emph{differential field}. A field $F\supset K$ is a \emph{differential extension} of the differential field $K$ if the derivation of $K$ extends to a derivation in $F$.

\noindent An element $c\in K$, being  $K$ be a differential field, is called a \emph{constant} if $c'=0$. Moreover, the set of all constants of a field $K$ is a subfield of $K$, this field is called the \emph{field of constants} and will denoted by $C_{K}$.

\noindent On the other hand, there exist an analogue concept of the splitting field of a polynomial on the differential Galois theory, and it is called the Picard-Vessiot extension of a homogeneous linear differential equation (HLDE). Given an HLDE $\mathcal{L}(y)=0$ of order $n$ over a differential field $K$, a differential extension $F$ is a Picard-Vessiot extension if:
\begin{enumerate}
\item $C_{F}=C_{K}$,
\item The $C_{F}$-vector space $V$ of solutions of $\mathcal{L}(y)=0$ has dimension $n$,
\item $F$ is generated over $K$ by the solutions of $\mathcal{L}(y)=0$.
\end{enumerate}
Finally, we are able to describe a Liouvillian function. Let $(K,')$ be a differential field. An extension $L/K$ is said \emph{Liouvillian} if $C_{K}=C_{L}$ and there exist a tower of differential fields $K=K_{0}\subset K_{1}\subset\cdots\subset K_{n}=L$ such that $K_{i}=K_{i-1}(t_{i})$, $i\in\{0,1,\dots,n\}$ where
\begin{enumerate}
\item $t_{i}\in K_{i-1}$, \textit{i.e.}, $t_{i}$ is an integral, or
\item $t_{i}\neq 0$ and $\frac{t_{i}'}{t_{i}}\in K_{i-1}$, \textit{i.e.}, $t_{i}$ is an expotential, or
\item $t_{i}$ is an algebraic element over $K_{i-1}$.
\end{enumerate}
Let $F/K$ be the Picard-Vessiot extension associated to the HLDE $\mathcal{L}(y)=0$. We say that the solutions of $\mathcal{L}(y)=0$ are \emph{Liouvillian} if there exist a Liouvillian extension $L/K$ such that $K\subset F\subset L$. Another important structure in this paper is the differential Galois group of $F/K$, it is the set of all differential $K$-automorphisms, that is $Gal(F/K)=\{\sigma:F\rightarrow F/ \sigma|_K=e\}$. The following theorem it is very well known in Picard-Vessiot theory, see \cite{IK,MV}.\medskip

\begin{theorem}
The component identity component of the differential Galois group associated to the HLDE $\mathcal{L}(y)=0$ is solvable if and only if the HLDE $\mathcal{L}(y)=0$ has Liouvillian solutions.
\end{theorem}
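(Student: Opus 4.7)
The plan is to prove both implications through the Picard--Vessiot Galois correspondence, translating field-theoretic conditions on $F/K$ into group-theoretic conditions on $G=Gal(F/K)$, and exploiting the structure theory of linear algebraic groups (in particular the Lie--Kolchin theorem) to connect solvability of $G^{0}$ with the existence of a flag of invariant subspaces in the solution space $V$.

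First I would set up the correspondence. Recall that under the Picard--Vessiot correspondence, closed subgroups of $G$ are in bijection with intermediate differential fields $K\subset E\subset F$, with $G^{0}$ (the identity component) corresponding to the relative algebraic closure $\widetilde K$ of $K$ in $F$. Moreover, normal closed subgroups correspond to Picard--Vessiot subextensions. Under this correspondence an elementary step $K_{i-1}\subset K_{i}=K_{i-1}(t_{i})$ of the three Liouvillian types in the definition corresponds, on the group side, to a quotient of the associated Galois group that is respectively the additive group $\mathbb G_{a}$ (integral), the multiplicative group $\mathbb G_{m}$ (exponential), or a finite group (algebraic).

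For the direction ($\Leftarrow$), assume that $\mathcal L(y)=0$ admits Liouvillian solutions, so that $K\subset F\subset L$ for some Liouvillian tower $K=K_{0}\subset K_{1}\subset\cdots\subset K_{n}=L$. Without loss of generality replace $L$ by its normal closure, so that $Gal(L/K)$ is well defined and surjects onto $G=Gal(F/K)$. The tower induces a chain of closed normal subgroups of $Gal(L/K)$ with successive quotients each isomorphic to a closed subgroup of $\mathbb G_{a}$, $\mathbb G_{m}$, or a finite group. All such quotients are abelian (the connected part, anyway), so $Gal(L/K)^{0}$ is solvable; pushing down to the quotient $G$ and taking identity components preserves solvability, hence $G^{0}$ is solvable.

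For the harder direction ($\Rightarrow$), assume $G^{0}$ is solvable. Since $G$ is a linear algebraic group acting faithfully on the $n$-dimensional $C_{K}$-vector space $V$ of solutions, the Lie--Kolchin theorem applies: the connected solvable group $G^{0}$ admits a common eigenvector, and iterating on quotients produces a complete flag
\begin{equation*}
0=V_{0}\subsetneq V_{1}\subsetneq\cdots\subsetneq V_{n}=V
\end{equation*}
of $G^{0}$-invariant subspaces with $\dim V_{i}=i$. The fixed field of $G^{0}$ is the algebraic closure $\widetilde K$ of $K$ in $F$, so working over $\widetilde K$ we translate each one-dimensional jump $V_{i-1}\subset V_{i}$ into a field step: choosing $y_{i}\in V_{i}\setminus V_{i-1}$, the $G^{0}$-action on $y_{i}$ modulo $V_{i-1}$ is through a character, which forces $y_{i}'/y_{i}$ (or an appropriate integral combination built from $y_{i-1},\dots,y_{1}$) to lie in the previously constructed field. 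This yields a tower of exponential and integral extensions above $\widetilde K$ which contains $F$; prepending the algebraic extension $K\subset\widetilde K$ gives a Liouvillian tower $L\supset F\supset K$, as required.

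The main obstacle is the technical bookkeeping in the second direction: one must show that the triangularization of $G^{0}$ produced by Lie--Kolchin actually yields elements $t_{i}$ whose derivative or logarithmic derivative falls in the previous field of the tower, and one must handle the passage to the fixed field of $G^{0}$ cleanly so that the finite quotient $G/G^{0}$ is absorbed as the initial algebraic step. Once the correspondence between one-dimensional invariant quotients and exponential/integral extensions is established in full generality, both implications follow by induction on the length of the derived series of $G^{0}$ (respectively on the length of the Liouvillian tower).
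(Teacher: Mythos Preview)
The paper does not actually prove this theorem; it is stated there as a well-known result of Picard--Vessiot theory, with references to Kaplansky \cite{IK} and van der Put--Singer \cite{MV}, and no argument is supplied. So there is no ``paper's own proof'' to compare against.

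Your sketch is essentially the classical proof one finds in those references: the Galois correspondence together with the Lie--Kolchin triangularization for the direction $(\Rightarrow)$, and tracking the structure of the Galois group along a Liouvillian tower for $(\Leftarrow)$. One point deserves tightening. In the $(\Leftarrow)$ direction you pass to ``the normal closure'' of $L$ and speak of $Gal(L/K)$, but a Liouvillian extension is not in general a Picard--Vessiot extension, so its differential Galois group is not defined a priori and there is no surjection onto $G$ to invoke. The standard remedy is inductive: at each stage replace $K_{i}$ by a Picard--Vessiot extension of $K$ containing both $F$ and the new generator $t_{i}$, and verify that adjunction of an integral, an exponential, or an algebraic element enlarges the Galois group only by a quotient lying in $\mathbb G_{a}$, $\mathbb G_{m}$, or a finite group. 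With that adjustment your outline matches the textbook argument that the paper is citing.
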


\noindent Along this paper, we consider the differential field $K=\mathbb{C}(x)$ and in consequence the field of constants will be $C_K=\mathbb{C}$.

\subsection{Gr\"obner Basis}
In this subsection we describe briefly what are Gr\"obner basis, following \cite{cox1}, which are very useful for our main study. We start studying a several variables polynomial system $f_{i}=0$. We can consider the \emph{ideal} generated by this set of polynomial and the idea is to find the best set of polynomial generating the same ideal (which means that they will have the same set of solutions, up to multiplicity) but in a triangular form. Formal definitions and properties of Gr\"obner basis can be found in \cite{cox1}. For our purposes, we summarize such definitions and properties as follows. 

\noindent Let $I\subseteq k[x_{1},\dots,x_{n}]$ be a nontrivial ideal, and fix a monomial ordering on the ring of polynomials $k[x_{1},\dots,x_{n}]$. Then we denote by $LT(I)$ the set of leading terms of nonzero elements of $I$ and $\langle LT(I)\rangle$ the ideal generated by the elements in $LT(I)$.
\ali The following theorem will guaranty the existence of Gr\"obner basis, which is also useful to prove the \emph{Hilbert Basis Theorem} which states that every ideal $I\subseteq k[x_{1},\dots,x_{n}]$ has a finite generating set (the so called \emph{ideal description problem}).\medskip

\begin{theorem}\label{gb1}
Let $I\subseteq k[x_{1},\dots,x_{n}]$, $I\neq \{0\}$, be an ideal. Then the following statements hold.
\begin{enumerate}
\item $\langle LT(I)\rangle$ is a monomial ideal.
\item There exist $g_{1},\dots,g_{s}\in I$ such that $\langle LT(I)\rangle=\langle LT(g_{1}),\dots,LT(g_{s}) \rangle$.
\end{enumerate}
\end{theorem}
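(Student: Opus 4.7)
The plan is to prove the two statements in sequence, using the second as a direct consequence of the first combined with Dickson's Lemma on monomial ideals.

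For part (1), my approach is to show that $\langle LT(I)\rangle$ is generated by a set of monomials, which is the very definition of a monomial ideal. First I would introduce the set of leading monomials $LM(I)=\{LM(f):f\in I,\ f\neq 0\}$, and note that for any $f\in I\setminus\{0\}$ the leading term $LT(f)$ equals $c\cdot LM(f)$ for some nonzero $c\in k$. Consequently, $\langle LT(I)\rangle=\langle LM(I)\rangle$ because each generator in the former set is a unit multiple of a generator in the latter and vice versa. Since $LM(I)$ is a set of monomials by construction, $\langle LM(I)\rangle$ is a monomial ideal by definition, which establishes (1).

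For part (2), the strategy is to invoke Dickson's Lemma, which guarantees that every monomial ideal in $k[x_1,\dots,x_n]$ is generated by a \emph{finite} subset of its monomial generators. Applying this to the monomial ideal $\langle LM(I)\rangle$ from part (1), there exist finitely many monomials $m_1,\dots,m_s\in LM(I)$ such that $\langle LM(I)\rangle=\langle m_1,\dots,m_s\rangle$. By the definition of $LM(I)$, for each $i\in\{1,\dots,s\}$ I can choose $g_i\in I\setminus\{0\}$ with $LM(g_i)=m_i$, whence $LT(g_i)$ is a nonzero scalar multiple of $m_i$. It then follows that $\langle LT(g_1),\dots,LT(g_s)\rangle=\langle m_1,\dots,m_s\rangle=\langle LT(I)\rangle$, which is exactly the required equality.

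The main obstacle is the use of Dickson's Lemma, which is the genuine combinatorial content hidden inside the theorem. Everything else in the argument is essentially bookkeeping: separating a leading term into scalar times monomial, and reinterpreting the finiteness statement about monomial generators as a finiteness statement about leading terms of ideal elements. Since Dickson's Lemma is a standard preliminary in \cite{cox1} and can be assumed here, the proof reduces to the two short observations above; if one wished to be self-contained, one would prove Dickson's Lemma by induction on the number of variables $n$, which is the only step requiring nontrivial work.
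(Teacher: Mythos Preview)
Your proof is correct and follows the standard argument from \cite{cox1}. Note that the paper itself does not supply a proof of this theorem at all: it is stated as background material taken from Cox--Little--O'Shea, so there is no in-paper proof to compare against. Your argument---reducing to the monomial ideal $\langle LM(I)\rangle$ and then invoking Dickson's Lemma for finite generation---is precisely the proof given in that reference, so in effect you have reproduced the source the paper defers to.
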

\medskip

\ali We recall that not all basis, such as the basis described in Theorem \ref{gb1}, have the same behavior.  
Fix a monomial order on the polynomial ring $k[x_{1},\dots,x_{n}]$. A finite subset $G\{g_{1},\dots,g_{s}\}$ of an ideal $I\subseteq k[x_{1},\dots,x_{n}]$ different from zero ideal is said to be a \emph{Gr\"obner basis} if  $\langle LT(I)\rangle=\langle LT(g_{1}),\dots,LT(g_{s}) \rangle$.
\ali Now, the most important fact (for our purpose) is that a Gr\"obner basis is indeed a basis.
\begin{proposition}\label{pgb1}
Fix a monomial order. Then every ideal $I\subseteq k[x_{1},\dots,x_{n}]$ has a Gr\"obner basis. Furthermore, any Gr\"obner basis for an ideal $I$ is a basis of $I$.
\end{proposition}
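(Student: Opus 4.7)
The first assertion follows almost immediately from Theorem \ref{gb1}. Given a nonzero ideal $I$, part (2) of that theorem produces elements $g_1,\dots,g_s\in I$ with $\langle LT(I)\rangle=\langle LT(g_1),\dots,LT(g_s)\rangle$, which is exactly the defining condition of a Gr\"obner basis. The zero ideal is handled by convention (the empty set, or by excluding it as the definition does). So my plan is to invoke Theorem \ref{gb1} verbatim and then devote the bulk of the argument to the second, less trivial claim: that such a $G=\{g_1,\dots,g_s\}$ already generates $I$ as an ideal, not merely captures its leading terms.

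For the second part, the plan is to take an arbitrary $f\in I$ and show $f\in\langle g_1,\dots,g_s\rangle$ by performing the multivariate division algorithm of $f$ by the ordered tuple $(g_1,\dots,g_s)$. This produces an expression
\begin{equation*}
f=a_1g_1+\cdots+a_sg_s+r,
\end{equation*}
where either $r=0$ or no monomial appearing in $r$ is divisible by any of $LT(g_1),\dots,LT(g_s)$. The goal is to force $r=0$.

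The key observation is that $r=f-\sum a_ig_i$ lies in $I$, because $f\in I$ and each $g_i\in I$. Assume for contradiction that $r\neq 0$. Then $LT(r)\in LT(I)\subseteq\langle LT(I)\rangle$, and by the Gr\"obner basis condition we have $LT(r)\in\langle LT(g_1),\dots,LT(g_s)\rangle$. Here I would invoke the standard lemma on monomial ideals: a monomial lies in a monomial ideal if and only if it is divisible by one of the generating monomials. Applying this to $LT(r)$ gives that $LT(r)$ is divisible by some $LT(g_i)$, contradicting the defining property of the remainder output by the division algorithm. Hence $r=0$ and $f$ belongs to $\langle g_1,\dots,g_s\rangle$, completing the argument.

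The main obstacle, or rather the two ingredients that deserve care, are (i) the existence and output specification of the multivariate division algorithm relative to a fixed monomial order, and (ii) the divisibility characterization for monomial ideals. Neither is stated explicitly in the excerpt, so the cleanest presentation would cite \cite{cox1} for both and then apply them as above; the logical heart of the proof is very short once these are in hand.
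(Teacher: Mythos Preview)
The paper does not actually supply a proof of this proposition; it is stated as a known fact in the preliminaries, with the reader referred to \cite{cox1} for details. Your argument is correct and is precisely the standard proof from that reference: existence via Theorem~\ref{gb1}, and the ``basis'' claim via the multivariate division algorithm together with the divisibility criterion for membership in a monomial ideal. There is nothing to compare against beyond noting that your write-up reproduces the textbook argument the authors are implicitly citing.
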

\ali It could happen that $G=\{1\}$, this means that the ideal $I$ is indeed the polynomial ring  $k[x_{1},\dots,x_{n}]$ and obviously we won't have any solution for our system. This will be very useful to determine nonintegrability in the next section. More information about Gr\"obner basis can be found in \cite{cox1}.

\subsection{Schr\"odinger Equation}
In this paper we are interested in the stationary one-dimensional non-relativistic  Schr\"odinger equation 
\begin{equation}\label{refeq33}
H\psi (z)=\lambda\psi(z).
\end{equation}
Where $H=-\partial+V(z)$ is the so called Schr\"odinger operator, $z$ is a coordinate on the real line, the eigenfunction $\psi$ is the wave function, the eigenvalue $\lambda$ is the energy and $V(z)$ is the energy potential or potential. Along this paper we will be interested on polynomial potentials, it means that $V(z)$ will be an element on the ring $\c[z]$.

\ali A funtion $\psi$ will be on the ${\rm L}_{2}$ space when its norm,
\begin{equation}
||\psi(z)||=\int_{-\infty}^{+\infty}|\psi(z)|^{2}dz
\end{equation}
is well defined. The point spectrum of $H$ consist of all the values of $\lambda$ for which exist a non-null solution in ${\rm L}_{2}$ to the Schr\"odinger equation with the operator $H$. A solution $\psi_{n}(z)$ is a bound state if $\lambda_{n}$ belong to the point spectrum of $H$ and its norm is finite, it means,
\begin{equation}
\lambda_{n}\in Spec_{p}(H)\hspace{5pt}and\hspace{5pt}\int_{-\infty}^{+\infty}|\psi_{n}(z)|^{2}dz<\infty,n\in\z_{+}
\end{equation}
Some times the point spectrum sets as a crescent sentence $\lambda_{0}<\lambda_{1}\cdots<\lambda_{n}<\cdots$ of eigenvalues, in that case let $\psi_{0}<\psi_{1}<\cdots<\psi_{n}<\cdots$ be the bound state wave funtions with energy $\lambda_{0}<\lambda_{1}\cdots<\lambda_{n}<\cdots$. The wave function $\psi_{0}$ which corresponds the minimun level of energy is called ground state while the others one are called excited states.

\begin{definition}
We named the set of all eigenvalues which the equation  \eqref{refeq33} is integrable in the Liouville sense, algebraic spectrum and we denote by $\Lambda\subseteq\mathbb{C}$

\end{definition}
\begin{definition}
We say that a potential $V(x)\in\mathbb{C}[x]$ is:

\begin{enumerate}
	\item Algebraically solvable if $\Lambda$ is infinite, or
	\item Algebraically quasi-solvable if $\Lambda$ is finite, or
	\item Algebraically non-solvable if $\Lambda=\emptyset$
	
\end{enumerate}
\end{definition}
\begin{example}

\begin{enumerate}
	\item $V(x)=0$, then $\Lambda=\mathbb{C}$, consequently $V(x)$ is a potential algebraically solvable.
	\item $V(x)=x^{6}+3x^{2}$, then $\Lambda=\{0\}$, consequently $V(x)$ is a potential algebraically quasi-solvable.
	\item $V(x)=x$, then $\Lambda=\emptyset$, consequently $V(x)$ is a potential algebraically non-solvable.
	\end{enumerate}
\end{example}

\noindent Theorem \ref{caracterization}, know as the Galoisian characterization of polynomial potentials theorem, is the main tool in this work and it allows us make a characterization of quasi-solvable polynomial potential, this theorem requires a suitable form of the potential $V(z)$ so we shall enunciate a lemma as a complent to solve this problem 
\begin{lemma}\cite[lemma 2.4, p.275]{acbl}
Every even degree monic polynomial can be written in one only way completing squares, that is,
\begin{equation}
P_{2n}(z)=z^{2n}+\sum_{i=0}^{2n-1}a_{i}z^{i}=\Big(z^{n}+\sum_{i=0}^{n-1}b_{i}z^{i} \Big)^{2}+\sum_{i=0}^{n-1}c_{i}z^{i}.
\end{equation}
\end{lemma}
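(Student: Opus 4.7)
The plan is to prove existence and uniqueness simultaneously by matching coefficients, solving recursively for $b_{n-1}, b_{n-2}, \ldots, b_0$ and then reading off the $c_i$'s. Let me set $Q(z):=\sum_{i=0}^{n-1}b_i z^i$, so the claim is that there are unique $b_i,c_i\in\mathbb{C}$ with
\[
P_{2n}(z) \;=\; \bigl(z^n+Q(z)\bigr)^2+\sum_{i=0}^{n-1}c_i z^i.
\]
Expanding the square gives $(z^n+Q(z))^2 = z^{2n}+2z^n Q(z)+Q(z)^2$. Because $\deg Q\le n-1$, we have $\deg Q^2\le 2n-2$, so $Q^2$ contributes only to monomials of degree at most $2n-2$, while $2z^n Q(z)=\sum_{i=0}^{n-1}2b_i z^{n+i}$ controls the range of degrees from $n$ through $2n-1$.

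First I would match coefficients in the top half of the polynomial, i.e.\ the monomials $z^{n},\ldots,z^{2n-1}$. For $k=0,1,\ldots,n-1$ the coefficient of $z^{2n-1-k}$ on the right equals $2b_{n-1-k}$ plus the coefficient of $z^{2n-1-k}$ in $Q(z)^2$, which is a polynomial expression in $b_{n-1},b_{n-2},\ldots,b_{n-k}$ only (because $Q^2$ contributes to $z^{2n-1-k}$ only through products $b_ib_j$ with $i+j=2n-1-k\ge 2(n-k)$, forcing $\max(i,j)\ge n-k$). Setting this equal to $a_{2n-1-k}$ I obtain a triangular system
\[
2b_{n-1-k} \;=\; a_{2n-1-k}-[\text{a polynomial in }b_{n-1},\ldots,b_{n-k}],
\]
which by induction on $k$ determines each $b_{n-1-k}$ uniquely (and, crucially, in closed form, since the leading coefficient $2$ is invertible in $\mathbb{C}$). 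The base case $k=0$ gives $b_{n-1}=a_{2n-1}/2$ because $Q^2$ contributes nothing of degree $2n-1$.

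Once every $b_i$ is fixed, I would obtain the $c_i$'s simply as
\[
c_i \;=\; a_i - [\text{coefficient of }z^i\text{ in }Q(z)^2], \qquad i=0,1,\ldots,n-1,
\]
which is forced and thus unique. Existence is then clear because the chosen $b_i$ and $c_i$ reproduce $P_{2n}(z)$ coefficient by coefficient. I do not expect a real obstacle: the only thing to watch is the bookkeeping that shows $b_{n-1-k}$ enters the degree-$(2n-1-k)$ equation with coefficient exactly $2$ and that all other $b_j$'s appearing there have $j>n-1-k$, which is precisely what guarantees the triangular structure and hence both existence and uniqueness of the decomposition.
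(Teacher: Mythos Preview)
Your argument is correct and is the natural one: solve a triangular system for $b_{n-1},\ldots,b_0$ by matching the coefficients of $z^{2n-1},\ldots,z^n$, then define the $c_i$'s as the residual low-degree part. The paper does not actually prove this lemma; it is quoted from \cite{acbl} and stated without proof here, so there is no ``paper's proof'' to compare against. Your approach is exactly the standard one and would be what is found in the cited reference.

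One small point of bookkeeping to tighten: when you argue that the coefficient of $z^{2n-1-k}$ in $Q(z)^2$ involves only $b_{n-1},\ldots,b_{n-k}$, you write that $i+j=2n-1-k$ forces $\max(i,j)\ge n-k$. That alone is not enough---you need \emph{both} $i$ and $j$ to be at least $n-k$, so that $b_{n-1-k}$ does not appear. The missing ingredient is the upper bound $i,j\le n-1$ coming from $\deg Q\le n-1$: then $\min(i,j)=(i+j)-\max(i,j)\ge(2n-1-k)-(n-1)=n-k$, and the triangular structure follows. With that one-line fix the proof is complete.
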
 
\begin{theorem}\cite[theorem 2.5, p.276]{acbl}\label{caracterization} Let us consider the
Schr\"odinger equation \eqref{refeq33}, with $V(x)\in\mathbb{C}[x]$ a polynomial of degree $k>0$. Then, its differential Galois group  $DGal(L_{\lambda}/K)$ falls in one of the following cases:

\begin{enumerate}
	\item  $DGal(L_{\lambda}/K) =SL(2,\mathbb{C})$,
	\item   $DGal(L_{\lambda}/K) =\mathbb{B}$.
\end{enumerate}

\ali Furthermore,  $DGal(L_{\lambda}/K) =\mathbb{B}$ if and only if the following conditions hold:

\begin{enumerate}
	\item $V(x)-\lambda$ is a polynomial of degree $k=2n$ writing in its completing square form
	\item $c_{n-1}-n$ o $-c_{n-1}-n$ is a positive even number $2s$, $s\in\mathbb{Z}_{+}$.
	\item There exists a monic polynomial $P_{s}$  of degree $s$, satisfying:
	
	\begin{eqnarray*}
		 &\partial_{x}^{2}P_{s}+2(x^{n}+\sum_{k=0}^{n-1}{b_{k}x^{k}})\partial_{x}P_{s}+(nx^{n-1}+\sum_{k=0}^{n-2}{(k+1)b_{k+1}x^{k}}-\\ & \sum_{k=0}^{n-1}{c_{k}x^{k}})P_{s}=0,& or\\
		 &\partial_{x}^{2}P_{s}-2(x^{n}+\sum_{k=0}^{n-1}{b_{k}x^{k}})\partial_{x}P_{s}-(nx^{n-1}+\sum_{k=0}^{n-2}{(k+1)b_{k+1}x^{k}}+\\ &\sum_{k=0}^{n-1}{c_{k}x^{k}})P_{s}=0.&
	\end{eqnarray*}
	\end{enumerate}
	
\ali	In such cases, the only possibilities for eigenfunctions with polynomial potentials
are given by

	\begin{eqnarray*}
	\psi_{\lambda}=P_{s}e^{f(x)}, & or & \Psi_{\lambda}=P_{s}e^{-f(x)},\,\,  where \,\, f(x)= \frac{x^{n+1}}{n+1}+\sum_{k=0}^{n-1}\frac{b_{k}x^{k+1}}{k+1}.
	\end{eqnarray*}
\end{theorem}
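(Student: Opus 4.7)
The plan is to apply Kovacic's algorithm to the Schr\"odinger equation in reduced form $\psi''=(V(x)-\lambda)\psi$. Since $V(x)-\lambda\in\mathbb{C}[x]$, the equation has no finite singular points, and the only singularity is an irregular one at $\infty$. I would first rule out the two ``middle'' Kovacic cases: the imprimitive (infinite dihedral) case requires an algebraic solution of degree $2$ to the associated Riccati equation, which is incompatible with having a single irregular singular point of the type produced by a polynomial potential; the finite primitive cases ($A_4, S_4, A_5$) require all solutions to be algebraic, which fails because the irregular singularity at $\infty$ forces at least one solution with essential singular part $e^{\pm f(x)}$ for a polynomial $f$ whenever $\deg V>0$. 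This yields the dichotomy in part (1).

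For the ``if'' direction of part (2), I would substitute the ansatz $\psi_\lambda=P_s\,e^{\pm f(x)}$ with $f(x)=\frac{x^{n+1}}{n+1}+\sum_{k=0}^{n-1}\frac{b_k x^{k+1}}{k+1}$ and $P_s$ monic of degree $s$ directly into $\psi''=(V(x)-\lambda)\psi$. Using the completed-square form of $V(x)-\lambda$, the equation for $P_s$ reduces exactly to one of the two displayed second-order ODEs, and matching the coefficient of the highest-order term fixes the relation $\pm c_{n-1}-n=2s$. The existence of such a closed-form solution immediately produces a Liouvillian (indeed exponential-polynomial) solution, hence $DGal(L_\lambda/K)\subseteq\mathbb{B}$, and by part (1) equality holds.

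For the ``only if'' direction, assume $DGal(L_\lambda/K)=\mathbb{B}$. By Kovacic's algorithm case $1$, there is a solution $\psi=e^{\int\omega}$ with $\omega\in\mathbb{C}(x)$, satisfying the Riccati equation $\omega'+\omega^2=V(x)-\lambda$. Since $V-\lambda$ is a polynomial with only an irregular singularity at $\infty$, a Newton polygon argument at $\infty$ decomposes $\omega$ as $\omega=\pm f'(x)+P'/P$, where $f\in\mathbb{C}[x]$ has degree $n+1$ and $P\in\mathbb{C}[x]$. Comparing polynomial parts of the Riccati equation forces $f$ to have exactly the stated coefficients $b_k$ (this is the completing-square step), the residual equation for $P$ becomes one of the two displayed ODEs, and the leading-coefficient balance fixes $\deg P=s$ together with the parity condition $\pm c_{n-1}-n=2s\in 2\mathbb{Z}_{+}$.

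The main obstacle is the asymptotic/Newton polygon analysis at infinity in the ``only if'' direction: one must justify that the polynomial part of $\omega$ cannot have degree greater than $n$, that $\omega$ has no poles at finite points other than simple poles at zeros of an honest polynomial $P$, and that the constants $b_k$ are rigidly determined by the completing-square decomposition of $V-\lambda$. Once this rigidity is established, matching the remaining coefficients to obtain the second-order ODE for $P_s$ and the parity/positivity condition on $\pm c_{n-1}-n$ is a direct coefficient-by-coefficient computation.
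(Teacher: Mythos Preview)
The paper does not contain a proof of this theorem: it is quoted verbatim from \cite[theorem~2.5, p.~276]{acbl} and used as a black box throughout. There is therefore nothing in the present paper to compare your proposal against.

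That said, your outline is precisely the strategy used in the cited source. The equation $\psi''=(V(x)-\lambda)\psi$ with $V-\lambda\in\mathbb{C}[x]$ has no finite singular points, so Kovacic's necessary conditions for Cases~2 and~3 (which require poles of specific orders, or algebraic solutions) are vacuously violated, leaving only the Borel/SL$(2,\mathbb{C})$ dichotomy. In Case~1 the Riccati solution $\omega$ must be rational with no finite poles except possibly simple ones coming from a polynomial $P$, and the dominant balance at $\infty$ forces its polynomial part to be $\pm f'(x)$ with $f$ exactly the antiderivative of the square-root factor in the completing-square decomposition; the leading-coefficient match then gives $\pm c_{n-1}-n=2s$ and the residual equation for $P$ is the displayed ODE. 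Your identification of the ``main obstacle'' is accurate: the only genuinely nontrivial step is the rigidity of the polynomial part of $\omega$ at $\infty$, and this is handled in \cite{acbl} by a direct order-by-order expansion rather than an abstract Newton polygon argument, but the content is the same.
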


\begin{corollary}
Assume that $V(x)$ is an algebraically solvable polynomial potential. Then, $V(x)$ is a polynomial of degree $2$.
\end{corollary}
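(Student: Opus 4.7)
The plan is to prove the contrapositive: if $\deg V \neq 2$, then the algebraic spectrum $\Lambda$ is finite, so that $V$ cannot be algebraically solvable. The whole argument flows through Theorem~\ref{caracterization}, which characterizes membership in $\Lambda$.

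The odd-degree case is immediate: if $k=\deg V$ is odd, then $V(x)-\lambda$ has odd degree for every $\lambda\in\mathbb{C}$, so condition (1) of Theorem~\ref{caracterization} fails and $DGal(L_\lambda/K)=SL(2,\mathbb{C})$ for all $\lambda$, whence $\Lambda=\emptyset$. The substantive case is therefore $\deg V=2n$ with $n\geq 2$. The key observation is that in the completing-squares form
\[
V(x)-\lambda=\Bigl(x^n+\sum_{k=0}^{n-1}b_k x^k\Bigr)^{\!2}+\sum_{k=0}^{n-1}c_k x^k,
\]
the $b_k$'s are determined recursively by the top-half coefficients of $V$, and then $c_1,\ldots,c_{n-1}$ are determined by the $b_k$'s together with the coefficients of $V$ of degrees $1,\ldots,n-1$; only $c_0$ absorbs the $-\lambda$ contribution. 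In particular, for $n\geq 2$ the scalar $c_{n-1}$ is independent of $\lambda$, so condition (2) of Theorem~\ref{caracterization} either fixes a unique $s\in\mathbb{Z}_+$ once and for all, or fails for every $\lambda$. The latter sub-case gives $\Lambda=\emptyset$ immediately.

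Assuming $s$ is fixed, I would plug the monic ansatz $P_s=x^s+\sum_{i=0}^{s-1}p_i x^i$ into the ODE of condition (3) and equate coefficients of $1,x,\ldots,x^{n+s-1}$. The coefficient of $x^{n+s-1}$ vanishes automatically by the choice of $s$, leaving $n+s-1$ polynomial equations in the $s+1$ unknowns $(p_0,\ldots,p_{s-1},\lambda)$. Crucially, $\lambda$ enters only through $c_0$, which multiplies $P_s$, so it appears only in the equations for the coefficients of $x^0,\ldots,x^s$. The high-degree equations are thus purely in the $p_i$'s and solve for them triangularly from the top down; back-substituting those expressions into the low-degree equations produces polynomial relations in $\lambda$ alone. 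Since $n+s-1\geq s+1$ whenever $n\geq 2$, the system is overdetermined, and the Gr\"obner-basis elimination machinery of Section~1.2 should yield a nonzero univariate polynomial in $\lambda$ whose roots comprise~$\Lambda$; hence $\Lambda$ is finite.

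The main obstacle is precisely proving that the univariate elimination polynomial is nontrivial rather than identically zero, since the vanishing of every elimination coefficient would correspond to a one-parameter family of exact eigenvalues and directly contradict the corollary. Heuristically this reflects the structural contrast with the degree-2 case, where $c_{n-1}=c_0$ is itself $\lambda$-dependent and produces the infinite harmonic-oscillator ladder; for $n\geq 2$ the integer $s$ is rigidly pinned down by $V$ alone, preventing such a ladder. Making this rigidity rigorous in full generality is the delicate technical point; once it is established, the odd case and the even-degree $\geq 4$ case together force $\deg V=2$.
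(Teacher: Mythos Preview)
The paper states this corollary without proof, treating it as an immediate consequence of Theorem~\ref{caracterization} (which is quoted from \cite{acbl}); there is therefore no argument in the paper to compare yours against.

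Your outline is the natural one and is structurally correct. One small remark: your claim that condition~(2) fixes a \emph{unique} $s$ is in fact right, since $c_{n-1}=n+2s_1$ and $c_{n-1}=-n-2s_2$ cannot hold simultaneously for nonnegative $s_1,s_2$ when $n\geq 1$. The only genuine gap is exactly the one you flag: showing that the eliminant in $\lambda$ is not identically zero. Your heuristic contrast with the $n=1$ case is suggestive but not a proof.

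That gap closes without any elimination theory. The ODE in condition~(3) has the form $\mathcal L_0(P)+\lambda P=0$, where $\mathcal L_0=\partial^2\pm 2A\,\partial+B$ is $\lambda$-independent (because $\lambda$ enters only through $c_0$). For $n\geq 2$ the operator $\mathcal L_0+\lambda$ raises $x$-degree by $n-1\geq 1$ on polynomials of degree $<s$, so a monic degree-$s$ solution, if it exists for a given $\lambda$, is unique; your triangular recursion then shows its coefficients $p_i(\lambda)$ are polynomials in $\lambda$. If solutions existed for infinitely many $\lambda$, the identity $\mathcal L_0(P_\lambda)+\lambda P_\lambda=0$ would hold for all $\lambda$. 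Writing $P_\lambda(x)=\sum_{k=0}^{d}\lambda^k R_k(x)$ with $R_d\neq 0$, the coefficient of $\lambda^{d+1}$ on the left is $R_d$, a contradiction. Hence at most finitely many $\lambda$ work for the fixed $s$, and $\Lambda$ is finite. With this addition your proof is complete.
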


\begin{corollary}
Suppose $V(x)$ a polynomial potential of odd degree. Then \eqref{refeq33} is not integrable.
\end{corollary}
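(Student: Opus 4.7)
The plan is to reduce the statement directly to the Galoisian characterization of polynomial potentials, Theorem \ref{caracterization}, combined with the solvability criterion of Theorem 1.1. Concretely, I will show that the hypothesis of odd degree forces the Galois group of the Schrödinger operator, for every value of $\lambda \in \mathbb{C}$, to be $SL(2,\mathbb{C})$, whose identity component is not solvable; hence the algebraic spectrum $\Lambda$ is empty and the equation is not integrable in the Liouvillian sense.

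First I would fix an arbitrary $\lambda \in \mathbb{C}$ and consider $V(x)-\lambda \in \mathbb{C}[x]$. Since subtracting a constant does not change the degree of a nonconstant polynomial, $V(x)-\lambda$ again has odd degree $k$. Now I would invoke Theorem \ref{caracterization}: its first necessary condition for the Galois group $DGal(L_\lambda/K)$ to equal the Borel subgroup $\mathbb{B}$ is that $V(x)-\lambda$ be of even degree $2n$ and admit a completing-square decomposition as in the preceding lemma. Because an odd-degree polynomial cannot be written as a square of a polynomial plus a lower-degree remainder (the leading term would have to be a square, which forces even degree), this condition fails identically in $\lambda$.

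Next, Theorem \ref{caracterization} gives only the dichotomy $DGal(L_\lambda/K) \in \{SL(2,\mathbb{C}),\mathbb{B}\}$, so the failure of the Borel case leaves us with $DGal(L_\lambda/K) = SL(2,\mathbb{C})$. Since $SL(2,\mathbb{C})$ is connected and not solvable, its identity component is non-solvable, and by Theorem 1.1 the Schrödinger equation \eqref{refeq33} has no Liouvillian solutions. As $\lambda$ was arbitrary, no $\lambda \in \mathbb{C}$ can lie in $\Lambda$; hence $\Lambda = \emptyset$ and $V(x)$ is algebraically non-solvable, which is exactly the assertion that \eqref{refeq33} is not integrable.

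There is essentially no technical obstacle here once Theorem \ref{caracterization} is available; the only point that deserves an explicit remark is the parity observation that rules out the completing-square form for odd degree, since this is the hinge that prevents the $\mathbb{B}$ branch of the dichotomy from ever being reached. Everything else is a direct application of the previously stated results.
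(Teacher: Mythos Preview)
Your argument is correct and is precisely the derivation the paper has in mind: the corollary is stated there without proof, as an immediate consequence of Theorem~\ref{caracterization}, and you have simply spelled out that implication (odd degree blocks condition~1 for $\mathbb{B}$, so the dichotomy forces $SL(2,\mathbb{C})$ for every $\lambda$, whence $\Lambda=\emptyset$). There is nothing to add or correct.
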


\section{Quasi-Solvable Polynomial Potentials}
Our goal in this section is to show how useful  Gr\"obner basis are to find the solution of the problem of Liouvillian integrability for polynomial potentials, we present a simple algorithm that together to theorem \ref{caracterization} reduce our initial problem to  algebraic geometry. The following proposition characterizes the Gr\"obner basis associated to Schr\"odinger equations with polynomial potentials.\\

\begin{proposition}\label{propgrob}
Let $I$ be the ideal generated by the set of polynomial conditions obtained in item 3 of theorem \ref{caracterization} and let $G$ be a Gr\"obner basis of $I$. Then, the points of the affine variety $V(G)$ are in correspondence with the coefficients of quasi-solvable potentials.
\end{proposition}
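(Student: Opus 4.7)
The plan is to recast item 3 of Theorem \ref{caracterization} as a finite polynomial system and then invoke Proposition \ref{pgb1}. Fix $n \geq 1$ and a positive integer $s$ (determined, up to sign, by condition 2 of that theorem via $c_{n-1}$). Introduce as unknowns the coefficients $b_0,\dots,b_{n-1},\,c_0,\dots,c_{n-1}$ of the completing-square form of $V(x)-\lambda$, together with the coefficients $p_0,\dots,p_{s-1}$ of the monic polynomial $P_s(x) = x^s + \sum_{j=0}^{s-1} p_j x^j$. Substituting $P_s$ into either of the two ODEs in item 3, the left-hand side becomes a polynomial in $x$ of bounded degree; setting each of its coefficients to zero yields a finite list $f_1,\dots,f_m$ of polynomial equations in the above unknowns, and $I = \langle f_1,\dots,f_m\rangle$ is the ideal of the statement.

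First, by Proposition \ref{pgb1}, any Gr\"obner basis $G$ of $I$ is in particular a basis of $I$, so $\langle G \rangle = I$ and $V(G) = V(I)$ as subsets of affine space. Hence a point of $V(G)$ is exactly a tuple $(b_i, c_i, p_j)$ satisfying every polynomial condition of item 3.

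Next, by Theorem \ref{caracterization}, the existence of such a tuple is equivalent to $DGal(L_\lambda/K) = \mathbb{B}$, and hence to the Schr\"odinger equation \eqref{refeq33} admitting a Liouvillian eigenfunction of the prescribed form $\psi_\lambda = P_s e^{\pm f(x)}$. Thus each point of $V(G)$ determines, through the completing-square decomposition, a polynomial potential $V(x)$ together with an eigenvalue $\lambda$ lying in the algebraic spectrum $\Lambda$; conversely every such pair arises from some point of $V(G)$. This yields the claimed correspondence.

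The main obstacle I anticipate is purely organisational rather than conceptual: one must consistently encode $\lambda$ (which enters only through $V(x)-\lambda$, effectively as a shift of the constant term $c_0$) and decide whether the ambient affine space includes the coefficients of $P_s$ or only those of the potential. With the latter convention one has to project $V(G)$ onto the $(b_i, c_i)$-coordinates and observe that the image is finite in the $\lambda$-direction, consistently with the definition of quasi-solvability. Once this indexing is fixed, the proof reduces to unwinding definitions and invoking Proposition \ref{pgb1}.
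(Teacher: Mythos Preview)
Your proposal is correct and follows the same line as the paper: both reduce the claim to the identity $V(G)=V(I)$, which is immediate from Proposition~\ref{pgb1}, after observing that the generators of $I$ are polynomials in the coefficients of the potential (the paper uses the $a_i$, you use the equivalent $b_i,c_i$ together with the $p_j$). Your write-up is considerably more detailed than the paper's two-line argument, in particular in making explicit how item~3 of Theorem~\ref{caracterization} produces the system $f_1,\dots,f_m$ and in flagging the bookkeeping of $\lambda$ and the $p_j$-coordinates, but the underlying idea is identical.
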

\begin{proof}
The set of polynomial generating $I$ are dependent of the coefficients of $V(z)=z^{2n}+\sum_{k=0}^{2n-1}a_{k}z^{k}$, so  $I=\{f_{i}(a_{0},\dots,a_{2n-1})\}$. By proposition \ref{pgb1} the affine variety $V(G)=V(I)$.
\end{proof}

\ali As application of Theorem \ref{caracterization} and Proposition \ref{propgrob} we present the following algorithm that can be implemented easier into a computational algebra software.\\

\begin{algorithm}\label{algrob}
\begin{enumerate}
\item Set the potential and write it in completing the square form.
\item Set a value for $s$.
\item Formulate the equations described on step three on theorem \ref{caracterization}.
\item Set a polynomial with unknown coefficients.
\item Replace polynomial described in last step in one of the equations formulated in step 3.
\item Set an homogeneous polynomial system with the coefficients of the polynomial obtained in last step.
\item Compute the Gr\"obner basis of the ideal generated by the system obtained in the previous step.
\end{enumerate}
\end{algorithm}

\begin{remark}
\ali Steps 1 to 3 are a procedure related to the application of the theorem \ref{caracterization} while steps 4 to 6 establishes polynomial conditions dependent of the coefficients of the potential in order to grant the existence of polynomial solutions $P_{s}$ to the equations described on the item 3 of the theorem \ref{caracterization}. Step 7 is the most important and corresponds to Proposition \ref{propgrob}, it calculates the simplest way of express these conditions via the Gr\"obner basis $G$. Thus, the quasi-solvable potentials will be exactly those which coefficients are points on the algebraic variety $V(G)$ generated by $G$. To illustrate this algorithm we present the figure 1.
\end{remark}

\ali Now we apply Proposition \ref{propgrob} and for instance Algorithm \ref{algrob} to compute the spectral polynomials concerning Quartic, Sextic, Octic, Decatic, Dodecatic and Tetrakaidecatic potentials.

\subsection{Quartic Potential}
Let us consider the potential $V(x)=x^{4}+4x^{3}+2x^{2}-\mu x$, so applying completing the square (see figure \ref{degree4}) we can writte this potential as follows,
\begin{equation}\label{quarticp}
V(x)=(x^{2}+2x-1)^{2}+(4-\mu)x-1.
\end{equation}
Now, by virtue of theorem \ref{caracterization} we have $\pm(4-\mu)-2=2s$, where $s\in\z_{+}$.  Consequantly, $\mu$ is a discrete parameter that can be $2-2s$ or either $6+2s$. As a third step in Galoisian characterization theorem, in order to determinate Liouvillian integrability for the Schr\"odinger equation associated to potential \eqref{quarticp}, it must exist a monic polynomial $P_{s}$ of degree $s$ satisfying at least one of the following equations
\begin{align}
P_{s}''+2(x^{2}+2x-1)P_{s}'+((\mu-2)x+3+\lambda)P_{s}&=0,\\
P_{s}''-2(x^{2}+2x-1)P_{s}'+((\mu-6)x-1+\lambda)P_{s}&=0.
\end{align}
Now, by algorithm \ref{algrob} we obtain the results summarized in Table 1  and in Table 2.
\begin{table}[h]\label{table1}
\caption{The spectral polynomial of potential \eqref{quarticp} for the case $\mu=2-2s$.}
\begin{tabular}{|l|l|l|}
\hline
 $s$&$\mu=2-2s$ & $T(s,\lambda)$ \\ \hline\hline
 $0$& $2$& $\lambda-1$ \\ \hline
 $1$& $0$& $\lambda^{2}+10\lambda+17$ \\ \hline
 $2$&$-2$& $\lambda^{3}+21\lambda^{2}+115\lambda+135$ \\ \hline
 $3$&$-4$& $\lambda^{4}+36\lambda^{3}+406\lambda^{2}+1572\lambda+1521$ \\ \hline
 $4$&$-6$& $\lambda^{5}+55\lambda^{4}+1050\lambda^{3}+8366\lambda^{2}+26613\lambda+27659$ \\ \hline
 $5$&$-8$& $\lambda^{6}+78\lambda^{5}+2255\lambda^{4}+30276\lambda^{3}+196015\lambda^{2}+596046\lambda+777825$ \\ \hline
\end{tabular}
\end{table}

\begin{table}[h]\label{table2}
\caption{The spectral polynomial of potential \eqref{quarticp} for the case $\mu=6+2s$.}
\begin{tabular}{|l|l|l|}
\hline
 $s$& $\mu=6+2s$ & $T(s,\lambda)$ \\ \hline\hline
$0$ &$6$ &$\lambda-1$  \\ \hline
$1$ &$8$ &$\lambda^{2}-6 \lambda+1$  \\ \hline
$2$ &$10$&$\lambda^{3}-15 \lambda^{2}+43 \lambda+51$ \\ \hline
$3$ &$12$&$\lambda^{4}-28 \lambda^{3}+214 \lambda^{2}-156 \lambda-1615$  \\ \hline
$4$ &$14$&$\lambda^{5}-45 \lambda^{4}+650 \lambda^{3}-2634 \lambda^{2}-8027 \lambda+41799$  \\ \hline
$5$ &$16$&$\lambda^{6}-66 \lambda^{5}+1535 \lambda^{4}-13404 \lambda^{3}+3343 \lambda^{2}+428670 \lambda-984879$  \\ \hline
\end{tabular}
\end{table}
\ali Some values in $\Lambda$ are easily computable, for example: for $\mu=2-2s$ we have the results summarized in Table 3,
\begin{table}[H]\label{table3}
\caption{Some values in $\Lambda$ for $\mu=2-2s$  in potential \eqref{quarticp}.}
\begin{tabular}{|l|l|l|}
\hline
$s$ & $\Lambda$  \\ \hline\hline
0 & $\{1\}$   \\ \hline
1 & $\{-2 \sqrt{2}-5, 2 \sqrt{2}-5\}$   \\ \hline
2 & \footnotesize$\begin{multlined}\{\frac{2 \sqrt[3]{-9+i \sqrt{1455}}}{3^{2/3}}+\frac{16}{\sqrt[3]{3 \left(-9+i \sqrt{1455}\right)}}-7,-\frac{\left(1+i \sqrt{3}\right) \sqrt[3]{-9+i \sqrt{1455}}}{3^{2/3}}-\frac{8 \left(1-i \sqrt{3}\right)}{\sqrt[3]{3 \left(-9+i \sqrt{1455}\right)}}-7,\\ -\frac{\left(1-i \sqrt{3}\right) \sqrt[3]{-9+i \sqrt{1455}}}{3^{2/3}}-\frac{8 \left(1+i \sqrt{3}\right)}{\sqrt[3]{3 \left(-9+i \sqrt{1455}\right)}}-7\}\end{multlined}$   \\ \hline
\end{tabular}
\end{table}
\ali On the other hand, for $\mu=6+2s$ we have the results summarized in Table 4., 
\begin{table}[H]
\caption{Some values in $\Lambda$ for $\mu=6+2s$  in potential \eqref{quarticp}.}
\begin{tabular}{|l|l|l|}
\hline
$s$& $\Lambda$   \\ \hline
$0$ & $\{1\}$   \\ \hline
$1$ & $\{3-2 \sqrt{2},2 \sqrt{2}+3 \}$   \\ \hline
$2$ & \footnotesize$\begin{multlined}\{\frac{2 \sqrt[3]{-9+i \sqrt{1455}}}{3^{2/3}}+\frac{16}{\sqrt[3]{3 \left(-9+i \sqrt{1455}\right)}}+5,-\frac{\left(1+i \sqrt{3}\right) \sqrt[3]{-9+i \sqrt{1455}}}{3^{2/3}}-\frac{8 \left(1-i \sqrt{3}\right)}{\sqrt[3]{3 \left(-9+i \sqrt{1455}\right)}}+5,\\ \}-\frac{\left(1-i \sqrt{3}\right) \sqrt[3]{-9+i \sqrt{1455}}}{3^{2/3}}-\frac{8 \left(1+i \sqrt{3}\right)}{\sqrt[3]{3 \left(-9+i \sqrt{1455}\right)}}+5\end{multlined}$   \\ \hline
\end{tabular}
\end{table}
\ali in addition, the solutions of the Schr\"odinger equation is given by 
\begin{equation}
\psi_{\lambda}(x)=\begin{cases} P_{s}e^{\frac{x^3}{3}+x^2-x} & \text{ if } \mu=2-2s,\\
P_{s}e^{-\frac{x^3}{3}-x^2+x} & \text{ if }  \mu=6+2s.\end{cases}
\end{equation}
\subsection{Sextic Potential}
Let us consider the nonsingular turbiner potential $x^{6}-(4J\mp 1)x^{2}$, where $J$ is a nonnegative integer, this potential has been studied by Bender and Dunne in \cite{BeDu}, they showed that there exist a correspondence between the solutions of the Schr\"odinger equation associated to this potential and sets of orthogonal polynomials ${P_{s}}$. In general we can consider non-monic polynomials of degree six and after we use the completing square  method (see figure \ref{degree6}). In this paper we shall distinguish two cases for the sextic anharmonic potentials:\\

\textbf{Case 1:} Let us set
\begin{equation}\label{turbiner}
\psi''(x)=(x^{6}-(4J- 1)x^{2}-\lambda)\psi(x),
\end{equation}
as our main object. Due to step two in theorem \ref{caracterization}, we can conclude $\mp(4J-1)-3=2s$ where $s\in\mathbb{Z}_{+}$. If $-(4J-1)-3=2s$, then $J=-\frac{s+1}{2}$ which is not possible because $J$ is a nonnegative integer. So $(4J-1)-3=2s$, \textit{i.e.}, $J=\frac{s+2}{2}$, therefore  $s$ takes nonnegative even values.

\ali Then, in order to achieve Liouvillian integrability, there must exist a monic polynomial $P_{s}(x)$ satisfying the equation
\begin{equation}\label{odeauxsex1}
	P_{s}''-2x^{3}P_{s}'-(3x^{2}-\lambda-(4J-1)x^{2})P_{s}=0.
\end{equation}
Algorithm \ref{algrob}, give us a tool to find spectral polynomials to equation \eqref{odeauxsex1}, which is summarized in Table 5.

\begin{table}[h]\label{table5}
\caption{Spectral polynomials of equation \eqref{odeauxsex1}.}
\centering
\begin{tabular}{|l|l|l|}
\hline
$s$ & $J$ & $T(s,\lambda)$ \\ \hline\hline
 0& 1&  $\lambda$\\ \hline
 2& 2&$\lambda^{2}-8$  \\ \hline
 4& 3&$\lambda^{3}-64 \lambda$  \\ \hline
 6& 4&$\lambda^{4}-240 \lambda^{2}+880$  \\ \hline
 8& 5&$\lambda^{5}-640 \lambda^{3}+47104 \lambda$  \\ \hline
 10&6&$\lambda^{6}-1400 \lambda^{4}+331456 \lambda^{2}+5184000$  \\ \hline
\end{tabular}
\end{table}
\ali On the other hand, the solutions of equations \eqref{turbiner} and \eqref{odeauxsex1} are easily calculable once the zeroes of $T(s,\lambda)$ are known, for example:
The Gr\"obner basis generated by replace $P_{6}=a x^5+b x^4+c x^3+d x^2+e x+g+x^6$ in \eqref{odeauxsex1} is,
\begin{equation*}
\{2880 - 240 \lambda^2 + \lambda^4, 384 g - 216 \lambda + \lambda^3, e, 
 120 + 32 d - \lambda^2, c, 4 b + \lambda, a\}
\end{equation*}
we conclude that the coefficients $g$, $d$, and $b$ are polynomials in $\lambda$ and that any other coefficients is zero. In addition, the solution to the Schr\"odinger equation is given by 
\begin{equation}
\psi_{\lambda}(x)=P_{s,\lambda}(x)e^{\frac{x^{4}}{4}},
\end{equation}
which is summarized in Table 6.
\begin{table}[h]
\caption{Solutions to \eqref{turbiner} and \eqref{odeauxsex1} for small values of $s$. }
\centering
\begin{tabular}{|l|l|l|l|}
\hline
$s$ & $\Lambda$ & $P_{s,\lambda}$ &$\psi_{\lambda}(x)$\\ \hline\hline
0 & $\{0\}$ & 1 &$exp(\frac{x^{4}}{4})$\\ \hline
2 & $\{\mp2\sqrt{2}\}$ & $x^{2}\pm\frac{1}{\sqrt{2}}$ &$(x^{2}\pm\frac{1}{\sqrt{2}})exp(\frac{x^{4}}{4})$\\ \hline
4 &$\{0,\mp 8\}$  & $x^{4}-\frac{3}{2}$, $x^{4}\pm 2x^{2}+\frac{1}{2}$ &$(x^{4}-\frac{3}{2})exp(\frac{x^{4}}{4}),(x^{4}\pm 2x^{2}+\frac{1}{2})exp(\frac{x^{4}}{4})$\\ \hline
\end{tabular}
\end{table}

\paragraph{Case 2:} In this case, let us set
\begin{equation}\label{turbiner2}
	\psi''(x)=(x^{6}-(4J+ 1)x^{2}-\lambda)\psi(x),
\end{equation}
as our object of study. In a similar way to the above case we can conclude that $J=\frac{s+1}{2}$, therefore $s$ takes nonnegative odd values. And by theorem \ref{caracterization}, there must exist a monic polynomial $P_{s}(x)$ satisfying the  equation,
\begin{equation}\label{odeauxsex2}
P_{s}''-2x^{3}P_{s}'-(3x^{2}-\lambda-(4J+1)x^{2})P_{s}=0.
\end{equation}
A simple application of algorithm \ref{algrob}, give us a list of the spectral polynomial for each value of $s$ (see Table 7).
\begin{table}[H]
\caption{Spectral polynomials of equation \eqref{odeauxsex2}.}
\centering
\begin{tabular}{|l|l|l|}
\hline
$s$ & $J$ & $T(s,\lambda)$ \\ \hline\hline
 1& 1&  $\lambda$\\ \hline
 3& 2&$\lambda^{2}-24$  \\ \hline
 5& 3&$\lambda^{3}-128 \lambda$  \\ \hline
 7& 4&$\lambda^{4}-400 \lambda^{2}+12096$  \\ \hline
 9& 5&$\lambda^{5}-960 \lambda^{3}+129024 \lambda$  \\ \hline
\end{tabular}
\end{table}
\ali In this case, the solution to the  Schr\"odinger equation \eqref{turbiner2} is given by
\begin{equation}
\psi_{\lambda}(x)=P_{s,\lambda}e^{-\frac{x^{4}}{4}},
\end{equation}
which is summarized in Table 8.
\begin{table}[H]
\caption{Solutions to equation \eqref{odeauxsex2} for small values of $s$. }
\centering
\begin{tabular}{|l|l|l|}
\hline
$s$ & $\Lambda$ &  $P_{s,\lambda}$\\ \hline\hline
 1&$\{0\}$  & $x$ \\ \hline
 3&$\{\pm 2\sqrt{6}\}$  & $x^{3}\mp\frac{\sqrt{6}}{2}x$  \\ \hline
 5&$\{0,\pm 8\sqrt{2}\}$  & $x^{5}-\frac{5}{2}x, x^{5}\mp 2\sqrt{2}x^{3}+\frac{3}{2}x$  \\ \hline
\end{tabular}
\end{table}

\subsection{Octic Potential}
Let us consider the potential $V(x)=x^8+(2 \delta +4) x^4+\mu  x^3+\delta ^2+4 \delta +4$, this potential can be written in the following form via completing the square (see figure \ref{degree8}):
\begin{equation}\label{octicp}
V(x)=(x^4+\delta +2)^{2}+\mu  x^3
\end{equation}
Now, we can use theorem \ref{caracterization} to determine conditions over the parameters $\delta$ and $\mu$ in order to achieve Liouvillian integrability. First, one can say that $\mu$ is a discrete parameter that can be $2s+4$ or either $-2s-4$ where $s$ is an nonnegative integer. Secondly, there must exist a monic polynomial $P_{s}$ of degree $s$ satisfying at least one of the following equations:
\begin{eqnarray}
 P_{s}''+2(x^4+\delta +2) P_{s}'+(-2 s x^{3}+\lambda)P_{s}&=0\label{odeauxoct1}\\
  P_{s}''-2(x^4+\delta +2) P_{s}'-(-2 s x^{3}-\lambda)P_{s}&=0\label{odeauxoct2}
\end{eqnarray}
Algorithm \ref{algrob} provide us a tool to calculate (if they exist) the polynomial solutions of above equations (see Table 9 and Table 10).
\begin{table}[h]
\caption{Suitable parameters and solution for equation \eqref{odeauxoct1}.}
\centering
\begin{tabular}{|l|l|l|l|l|}
\hline
s & $\mu$ & $\delta$ & $\Lambda$ & $P_{s}$  \\ \hline\hline
0 & 4 & $\c$ & 0 & 1 \\ \hline
1 & 6 & -2 & 0 & $x$ \\ \hline
2 &\multicolumn{4}{l|}{Not integrable}  \\ \hline
3 &\multicolumn{4}{l|}{Not integrable}  \\ \hline
4 &\multicolumn{4}{l|}{Not integrable}  \\ \hline
5 & 14 & -2 & 0 & $x^{5}+2$ \\ \hline
\end{tabular}
\end{table}
\begin{table}[h]
\caption{Suitable parameters and solution for equation \eqref{odeauxoct2}.}
\centering
\begin{tabular}{|l|l|l|l|l|}
\hline
s & $\mu$ & $\delta$ & $\Lambda$ & $P_{s}$  \\ \hline\hline
0 & -4 & $\c$ & 0 & 1 \\ \hline
1 & -6 & -2 & 0 & $x$ \\ \hline
2 &\multicolumn{4}{l|}{Not integrable}  \\ \hline
3 &\multicolumn{4}{l|}{Not integrable}  \\ \hline
4 &\multicolumn{4}{l|}{Not integrable}  \\ \hline
5 & -14 & -2 & 0 & $x^{5}-2$ \\ \hline
\end{tabular}
\end{table}

\ali In addition, the solutions to the Schr\"odinger equation associated to the potential \eqref{octicp} are given by:
\begin{equation}
\psi_{\lambda}(x)=\begin{cases} P_{s}e^{\frac{x^4}{4}+(\delta+2)x} & \text{ if } \mu=2s+4,\\
P_{s}e^{-\frac{x^4}{4}-(\delta+2)x} & \text{ if }  \mu=-2s-4.\end{cases}
\end{equation}

\ali We can conclude that algorithm \ref{algrob} is also a useful tool to determine non-integrability of certain systems.

\subsection{Decatic Potential}
In this section let us consider the potential $V(x)=x^{10}-x^8+x^6+\delta  x^4+ \epsilon x^2 $, which include the specific potentials studied by Chaudhuri and Mondal in \cite{ChMo}, there is a  approach developed by D. Brandon and N. Saad in \cite{BrNa} using asymptotic iteration method but in this work we will apply the theorem \ref{caracterization} in order to make a Galoisian approach.

\ali Above decatic potential can be written in the following way via completing the square (see figure \ref{degree10}):
\begin{equation}\label{decaticp}
V(x)=\bigg(x^{5}-\frac{x^{3}}{2}+\frac{3 x}{8}\bigg)^{2}+\left(\delta +\frac{3}{8}\right) x^4+ \left(\epsilon -\frac{9}{64}\right)x^2,
\end{equation}
in virtue of theorem \ref{caracterization}, in order to determine Liouvillian integrability, there must exist a monic polynomial $P_{s}$ of degree $s$, satisfying one of the following equations
\begin{eqnarray}
P_{s}''+2\left(x^5-\frac{x^3}{2}+\frac{3 x}{8}\right)P_{s}'+\bigg(\left(\frac{37}{8}-\delta \right) x^4+ \left(-\epsilon -\frac{87}{64}\right)x^2+\frac{3}{8} +\lambda\bigg)P_{s}&=0,\label{odeauxdec1}\\
P_{s}''-2\left(x^5-\frac{x^3}{2}+\frac{3 x}{8}\right)P_{s}-\bigg(\left(\delta +\frac{43}{8}\right) x^4+ \left(\epsilon -\frac{105}{64}\right)x^2+\frac{3}{8}-\lambda \bigg)P_{s}&=0.\label{odeauxdec2}
\end{eqnarray}
\begin{remark}
It is also clear from theorem \ref{caracterization} that $\delta$ is a number of the form $2s+\frac{37}{8}$ or $-2s-\frac{43}{8}$ where $s\in\z_{+}$.
\end{remark}
\ali An application of algorithm \ref{algrob}, give us values of parameters of the potential \eqref{decaticp}, see Table 11, Table 12 and Table 13.
\begin{table}[h]
\caption{Suitable parameters for equation \eqref{odeauxdec1}. }
\centering
\begin{tabular}{|l|l|l|l|}
\hline
$s$ & $\delta$ & $M_{s}(\epsilon)$ &  $\lambda$\\ \hline\hline
 0& $\frac{37}{8}$ & $-\epsilon -\frac{87}{64}$ &$-\frac{3}{8}$\\ \hline
 1& $\frac{53}{8}$ & $64 \epsilon +151$ &$-\frac{9}{8}$\\ \hline
 2& $\frac{69}{8}$ & $262144 \epsilon ^3+2117632 \epsilon ^2+6925504 \epsilon +17694023$ &$\frac{-4096 \epsilon ^2-19328 \epsilon -49425}{16384}$\\ \hline
 3& $\frac{85}{8}$ & $262144 \epsilon ^3+2904064 \epsilon ^2+11947200 \epsilon +43776519$ &$\frac{-4096 \epsilon ^2-27520 \epsilon -85137}{16384}$\\ \hline
\end{tabular}
\end{table}

\begin{table}[H]
\caption{Suitable parameters for equation \eqref{odeauxdec2}. }
\centering
\begin{tabular}{|l|l|l|l|}
\hline
$s$ & $\delta$ & $M_{s}(\epsilon)$ &  $\lambda$\\ \hline\hline
0 & $-\frac{43}{8}$ & $-\epsilon+\frac{105}{64}$ &$\frac{3}{8}$\\ \hline
1 & $-\frac{59}{8}$ & $64 \epsilon -169$ & $\frac{9}{8}$\\ \hline
2 & $-\frac{75}{8}$ & $262144 \epsilon ^3-2338816 \epsilon ^2+8178880 \epsilon -3037945$  & $\frac{4096 \epsilon ^2-21632 \epsilon +55185}{16384}$ \\ \hline
3 & $-\frac{91}{8}$ & $262144 \epsilon ^3-3125248 \epsilon ^2+13642944 \epsilon +2959431$  & $\frac{4096 \epsilon ^2-29824 \epsilon +93201}{16384}$\\ \hline
\end{tabular}
\end{table}
\begin{remark}
The zeroes of polynomial $M_{s}(\epsilon)$ are the suitable values of parameter $\epsilon$ of potential \eqref{decaticp} in order to obtain Liouvillian integrability.
\end{remark}
\ali The solutions of the equation \eqref{odeauxdec1} and \eqref{odeauxdec2} are easily computable once the value of $\epsilon$ sets.

\begin{table}[H]
\caption{Polynomial solutions $P_{s}$ }
\centering
\begin{tabular}{|l|l|l|}
\hline
$s$ & Eq. \eqref{odeauxdec1} & Eq. \eqref{odeauxdec2}\\ \hline\hline
0 & 1 &1\\ \hline
1 & $x$ &$x$\\ \hline
2 & $x^{2}-\frac{64\epsilon+215}{256}$& $x^{2}+\frac{64\epsilon-233}{256}$\\ \hline
3 &$x^{3}-\frac{64\epsilon+279}{256}x$&$x^{3}+\frac{64\epsilon-297}{256}x$\\ \hline
\end{tabular}
\end{table}
\ali The solutions of the Schr\"odinger equation $\psi_{\lambda}$ generated by the equation \eqref{odeauxdec2} give us bound states of the form
\begin{equation}
\psi_{\lambda}(x)=P_{s,\lambda}e^{-\frac{x^6}{6}+\frac{x^4}{8}-\frac{3 x^2}{16}}.
\end{equation}
In the other hand, we never obtain bound states from equation \eqref{odeauxdec1} but the solution $\psi_{\lambda}$ is given by,
\begin{equation}
\psi_{\lambda}(x)=P_{s,\lambda}e^{\frac{x^6}{6}-\frac{x^4}{8}+\frac{3 x^2}{16}}.
\end{equation}
\subsection{Dodecatic Potential} Consider the potential $V(x)=x^{12}+\kappa  x^6+\mu  x^5$, this potentical can be written in the following form using completing the square (see figure \ref{degree12}):
\begin{equation}\label{dodecaticp}
V(x)=\bigg(x^6+\frac{\kappa }{2}\bigg)^{2}+\mu  x^5-\frac{\kappa ^2}{4}
\end{equation}
By step two in theorem \ref{caracterization} we have that $\mu$ is a discrete parameter that can be $2s+6$ or either $-2s-6$. Now, by third step in theorem \ref{caracterization}, there must exist a monic polynomial $P_{s}$ of degree $s$ that satisfies  at least one of the following equations:
\begin{eqnarray}
P_{s}''+(2x^{6}+\kappa)P_{s}'+\left((6-\mu ) x^5+\frac{\kappa ^2}{4}+\lambda\right)P_{s}&=0\label{odeaux121}\\
P_{s}''-(2x^{6}+\kappa)P_{s}'-\left((6+\mu ) x^5-\frac{\kappa ^2}{4}-\lambda\right)P_{s}&=0\label{odeaux122}
\end{eqnarray}
A direct application of algorithm \ref{algrob} give us suitable values of parameters of the potential \eqref{dodecaticp} in order to determine Liouvillian integrability or establish nonintegrability, see Table 14.
\begin{table}[H]
\caption{Suitable values of parameters for equations \eqref{odeaux121} and  \eqref{odeaux122}. }
\centering
\begin{tabular}{|l|l|l|l|l|l|}
\hline
$s$ & $\mu=2s+6$  &$\mu=-2s-6$  &$\kappa$& $\lambda$ & $P_{s}$ \\ \hline\hline
0   & 6      &  -6&$\c$    & $-\frac{\kappa^{2}}{4}$ &1\\ \hline
1   & 8      &  -8&0       &0           &$x$\\ \hline
2   & \multicolumn{5}{l|}{Not integrable}  \\ \hline
3   & \multicolumn{5}{l|}{Not integrable}  \\ \hline
4   & \multicolumn{5}{l|}{Not integrable}  \\ \hline
\end{tabular}
\end{table}
\ali additionally, the solution to the Schr\"odinger equation associated to the potential \eqref{dodecaticp} is given by:
\begin{equation}
\psi_{\lambda}(x)=\begin{cases} P_{s}e^{\frac{x^7}{7}+\frac{\kappa}{2}x} & \text{ if } \mu=2s+6,\\
P_{s}e^{-\frac{x^7}{7}-\frac{\kappa}{2}x} & \text{ if }  \mu=-2s-6.\end{cases}
\end{equation}
\subsection{Tetrakaidecatic Potential}
Consider the potential $V(x)=(x^7+\delta +2)^2+\mu  x^6+\kappa  x^2$, this potential can be written in the following form using completing the square (see figure \ref{degree14}):
\begin{equation}\label{tetrakaidecaticp}
V(x)=(x^7+\delta +2)^{2}+\mu  x^6+\kappa  x^2
\end{equation}
It is clear that $\mu$ is a discrete parameter of the form $2s+7$ or either $-2s-7$. Now, in order to determine Liouvillian integrability, there must exist a monic polynomial $P_{s}$ of degree $s$ which satisfies some of the following equations:
\begin{eqnarray}
P_{s}''+2(x^7 + \delta + 2)P_{s}'+(-2 s x^6-\kappa  x^2+\lambda)&=0\label{odeaux141}\\
P_{s}''-2(x^7 + \delta + 2)P_{s}'-(-2 s x^6+\kappa  x^2-\lambda)&=0\label{odeaux142}
\end{eqnarray}
Applying algorithm \ref{algrob} to this equations, we will obtain suitable parameters for potential \eqref{tetrakaidecaticp} or we will determine nonintegrability for the associated Schr\"odinger equation, see Table 15.
\begin{table}[H]
\caption{Suitable values of parameters for equations \eqref{odeaux141} and  \eqref{odeaux142}. }
\centering
\begin{tabular}{|l|l|l|l|l|l|l|}
\hline
$s$ & $\mu=2s+7$  &$\mu=-2s-7$ &$\delta$ &$\kappa$& $\lambda$ & $P_{s}$ \\ \hline\hline
0   & 7           &  -7        & $\c$    & 0      & 0         &1\\ \hline
1   & 9           &  -9        &  -2     & 0      &0          &$x$\\ \hline
2   & \multicolumn{6}{l|}{Not integrable}  \\ \hline
3   & \multicolumn{6}{l|}{Not integrable}  \\ \hline
4   & \multicolumn{6}{l|}{Not integrable}  \\ \hline
\end{tabular}
\end{table}
\ali In addition, the solution to the Schr\"odinger equation associated to the tetrakaidecatic potential \eqref{tetrakaidecaticp} is given by:
\begin{equation}
\psi_{\lambda}(x)=\begin{cases} P_{s}e^{\frac{x^8}{8}+(\delta+2)x} & \text{ if } \mu=2s+7,\\
P_{s}e^{-\frac{x^8}{8}-(\delta+2)x} & \text{ if }  \mu=-2s-7.\end{cases}
\end{equation}

\section{Appendix A}

\begin{figure}[H]
\centering
\caption{Example of algorithm \ref{algrob} for a decatic potential using Mathematica.}
\label{exampmath}
\includegraphics[width=10cm, height=15cm]{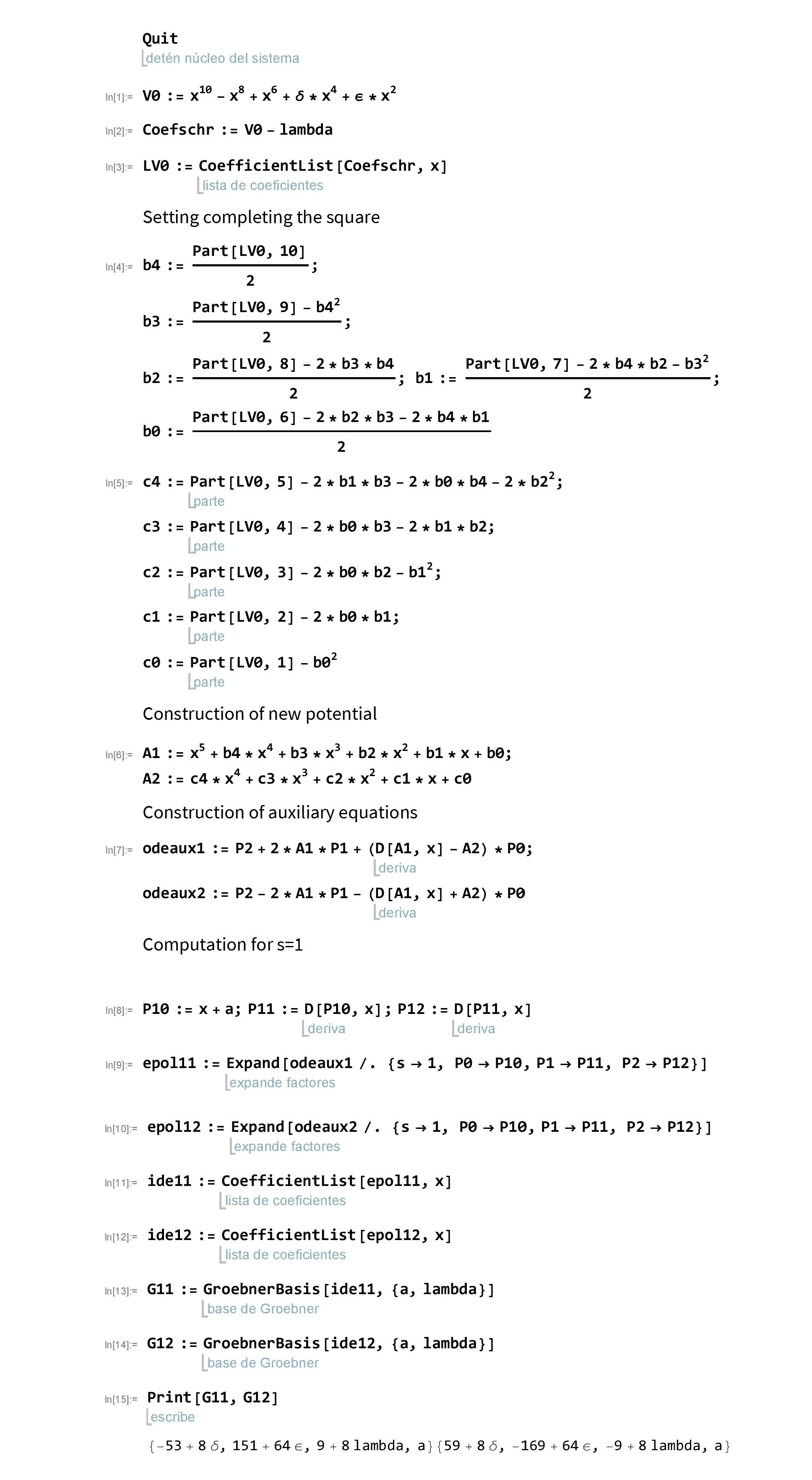}
\end{figure}

\begin{figure}[H]
\centering
\caption{Completing the square for degree 4}
\label{degree4}
\includegraphics[scale=0.4]{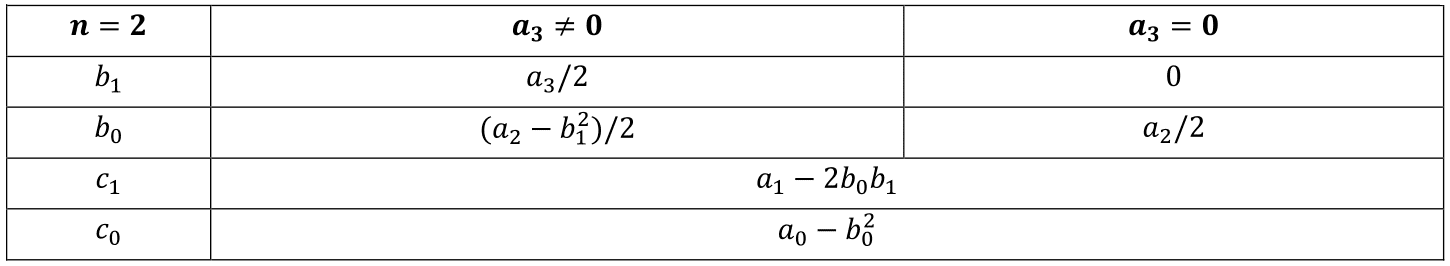}
\end{figure}

\begin{figure}[H]
\centering
\caption{Completing the square for degree 6}
\label{degree6}
\includegraphics[scale=0.4]{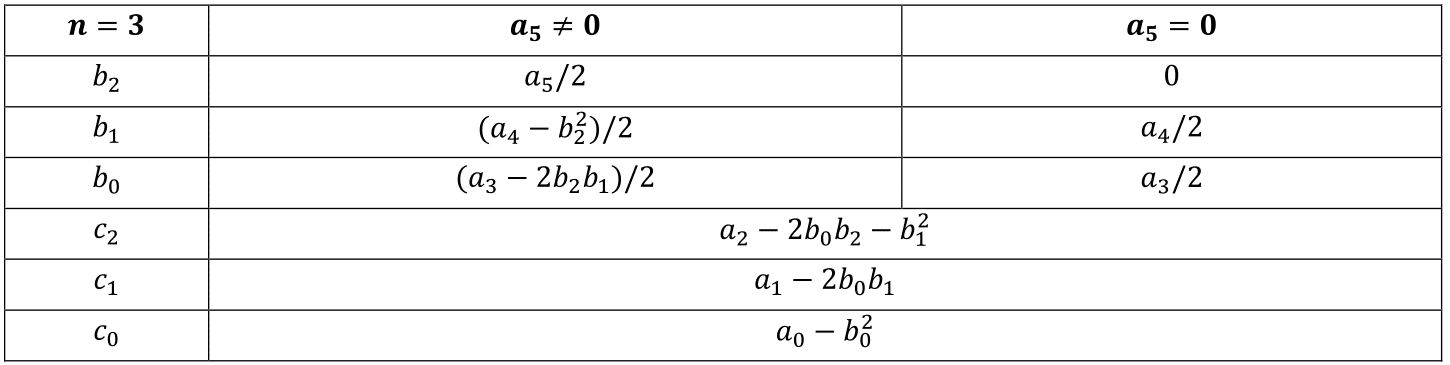}
\end{figure}

\begin{figure}[H]
\centering
\caption{Completing the square for degree 8}
\label{degree8}
\includegraphics[scale=0.4]{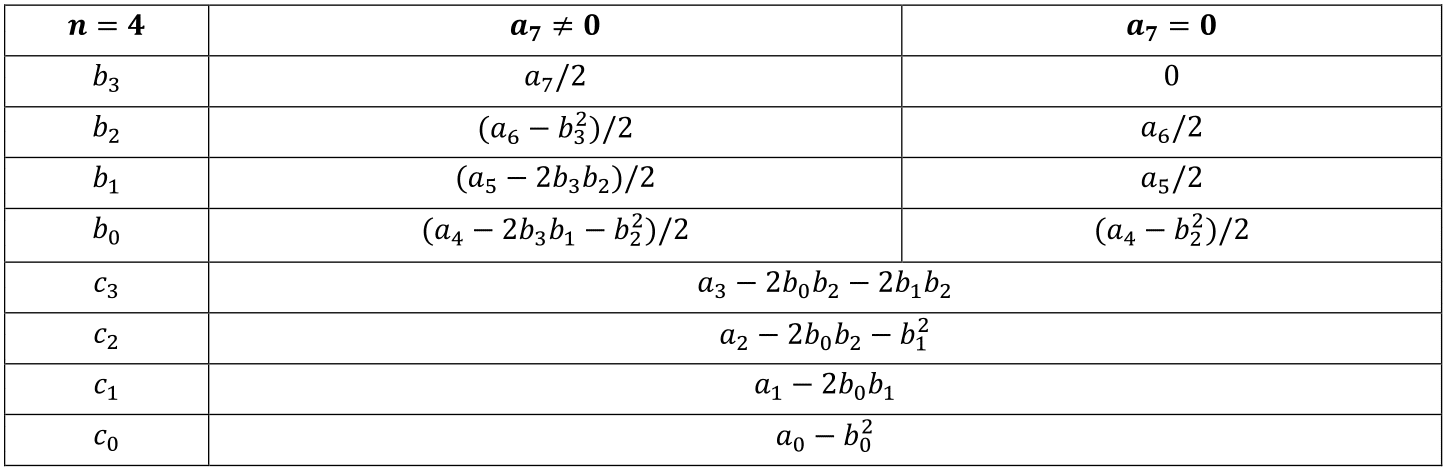}
\end{figure}

\begin{figure}[H]
\centering
\caption{Completing the square for degree 10}
\label{degree10}
\includegraphics[scale=0.4]{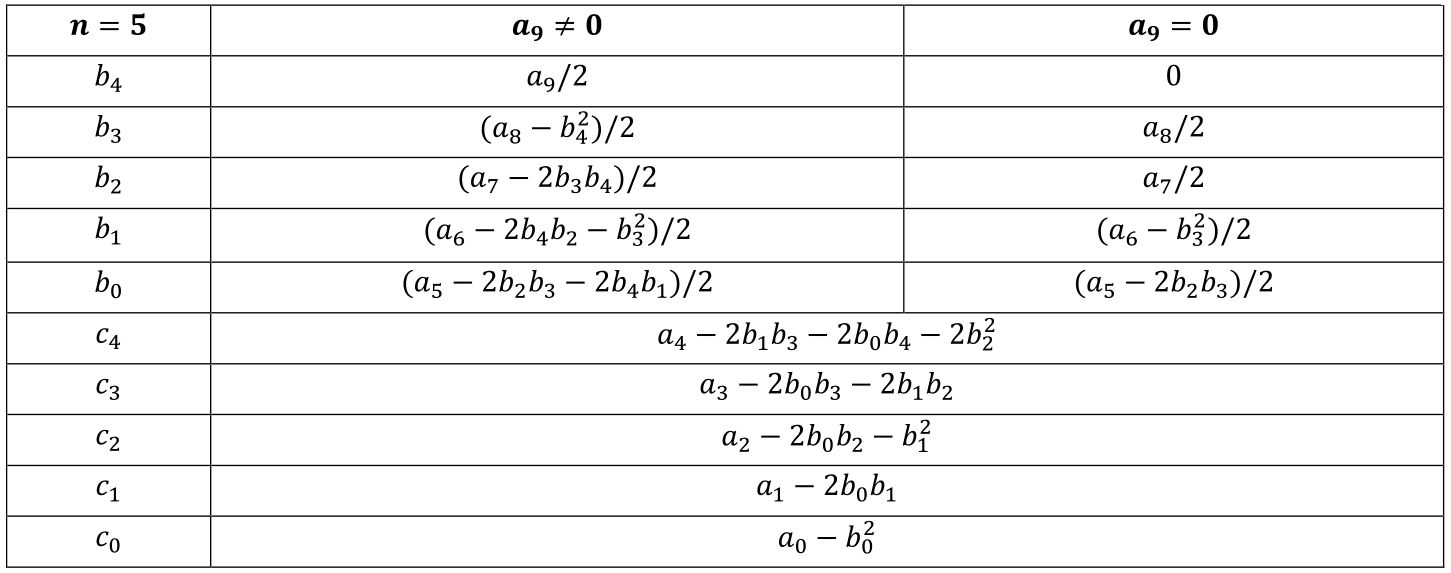}
\end{figure}

\begin{figure}[H]
\centering
\caption{Completing the square for degree 12}
\label{degree12}
\includegraphics[scale=0.4]{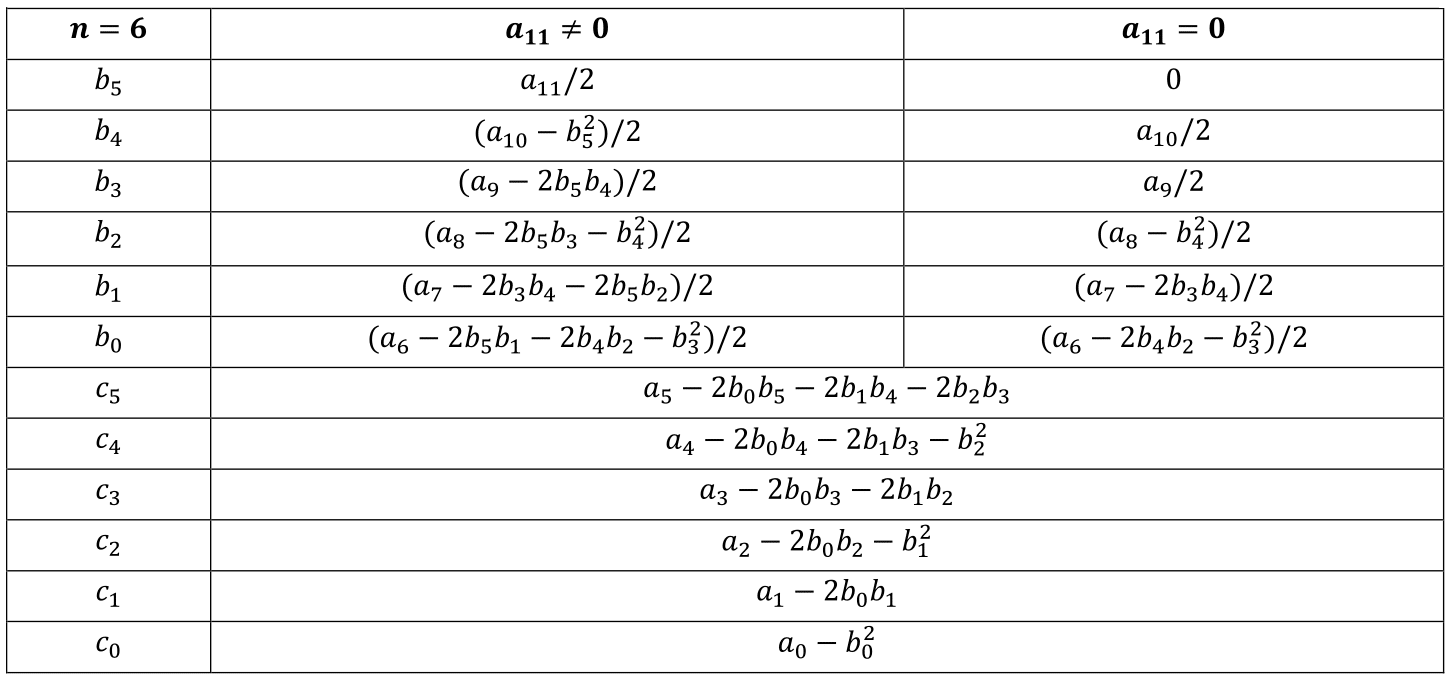}
\end{figure}

\begin{figure}[H]
\centering
\caption{Completing the square for degree 14}
\label{degree14}
\includegraphics[scale=0.4]{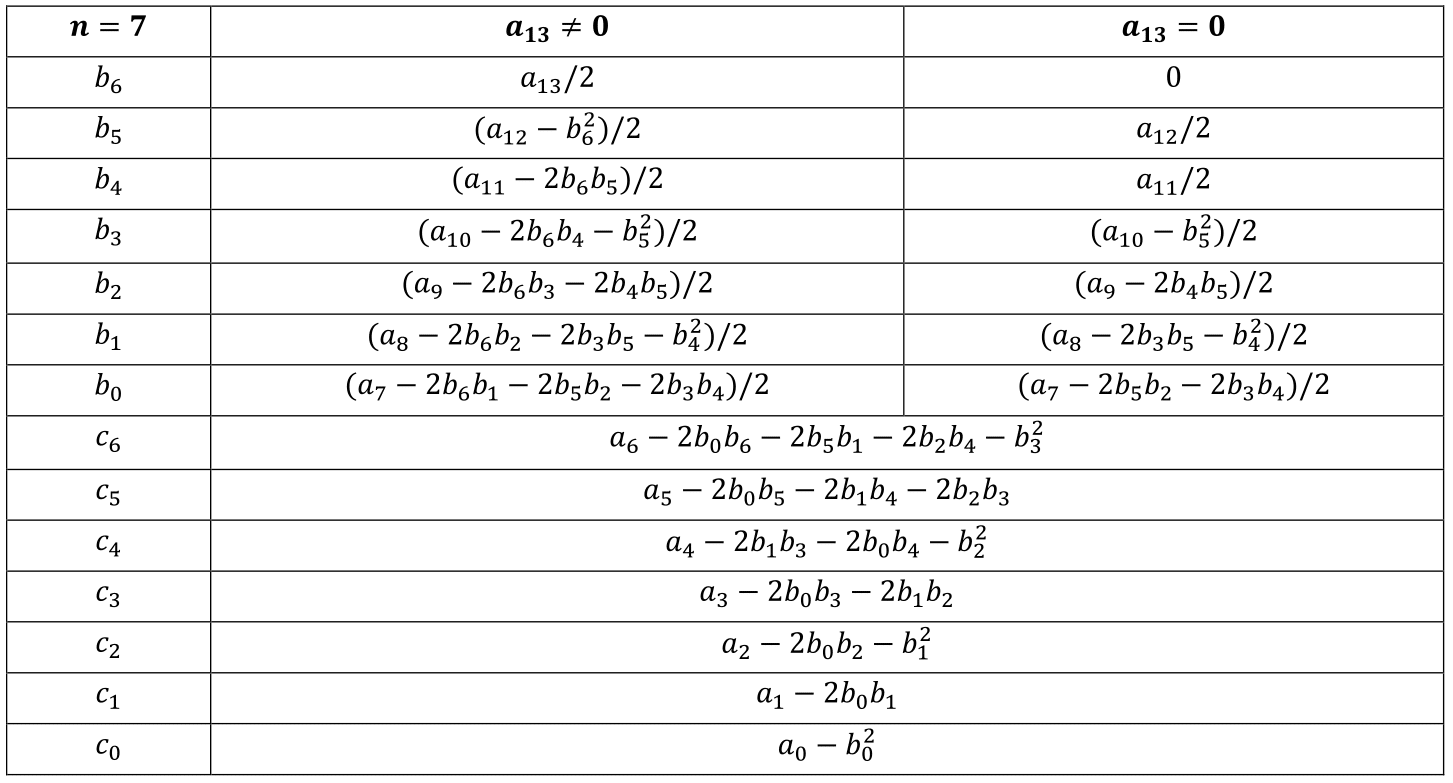}
\end{figure}

\end{document}